\newcommand{\removelatexerror}{\let\@latex@error\@gobble}
\tikzset{external/system call={pdflatex \tikzexternalcheckshellescape -interaction=batchmode -jobname "\image" "\texsource"}}
\pgfplotsset{compat=newest}
\tikzstyle{help lines}=[black!20,dashed]
\newtheorem{definition}{Definition}
\newtheorem{theorem}[definition]{Theorem}
\newtheorem{lemma}[definition]{Lemma}
\newtheorem{corollary}[definition]{Corollary}
\newtheorem{fact}[definition]{Fact}
\DeclareMathOperator{\defi}{def}
\newcommand{\defeq}{\overset{\defi}{=}}
\newcommand{\F}[1]{\mathbb F_{#1}}
\newcommand{\Fq}{\F{q}}
\newcommand{\Fxsub}[1]{\ensuremath{\mathbb{F}_{#1}[X]}}
\newcommand{\Fqx}{\Fxsub{q}}
\renewcommand{\vec}[1]{\mathbf #1}
\newcommand{\M}[2][\empty]{
  \ifthenelse{\equal{#1}{\empty}}
    {\ensuremath{\mathbf{#2}}}
    {\ensuremath{{\mathbf{#2}}_{#1}}}
}
\newcommand{\SET}[1]{\ensuremath{\mathsf{#1}}}
\newcommand{\LIN}[4]{\ensuremath{[#1,#2,#3]_{#4}}}
\newcommand{\defset}[2][\empty]{
  \ifthenelse{\equal{#1}{\empty}}
    {\ensuremath{\SET{D}_{#2}}}
    {\ensuremath{\SET{D}^{[#1]}_{#2}}}
}
\newcommand{\QCCa}{\ensuremath{\mathcal{A}}}
\newcommand{\QCCam}{\ensuremath{m_{A}}} 
\newcommand{\QCCal}{\ensuremath{\ell_{A}}}
\newcommand{\QCCak}{\ensuremath{k_{A}}} 
\newcommand{\QCCad}{\ensuremath{d_{A}}} 
\newcommand{\QCCb}{\ensuremath{\mathcal{B}}}
\newcommand{\QCCbm}{\ensuremath{m_{B}}} 
\newcommand{\QCCbl}{\ensuremath{\ell_{B}}}
\newcommand{\QCCbk}{\ensuremath{k_{B}}} 
\newcommand{\QCCbd}{\ensuremath{d_{B}}} 
\newcommand{\QCC}{\ensuremath{\mathcal{C}}}
\newcommand{\QCCm}{\ensuremath{m}} 
\newcommand{\QCCl}{\ensuremath{\ell}}
\newcommand{\QCCk}{\ensuremath{k}} 
\newcommand{\QCCd}{\ensuremath{d}} 
\newcommand{\CYCb}{\ensuremath{\mathcal{B}}}
\newcommand{\inta}{\ensuremath{a}}
\newcommand{\intb}{\ensuremath{b}}
\renewcommand{\tilde}{\widetilde}
\renewcommand{\bar}{\overline}
\newcommand{\LINQCC}[5]{\ensuremath{[#1\cdot#2,#3,#4]_{#5}}}
\newcommand{\QCClen}{\ensuremath{m}}
\newcommand{\QCCcyc}{\ensuremath{\ell}}
\newcommand{\QCCext}{\ensuremath{r}}
\DeclareDocumentCommand \eigenvector { ooo }
{
	\IfNoValueTF {#3}
	{  
		\IfNoValueTF {#2}
			{ 
				\IfNoValueTF {#1}
				{ 	
					\ensuremath{\mathbf{v}}	
				}
				{ 
					\ensuremath{v_{#1}}	
				}
			}
			{ 
			\ensuremath{\mathbf{v}_{#1}^{\langle #2 \rangle}}			
			}
	}
	{
	\ensuremath{{v}_{#1,#3}^{\langle #2 \rangle}}
	}
}
\DeclareDocumentCommand \supp{o}
{
	\IfNoValueTF {#1}
	{  
		\ensuremath{\mathcal{Y}}
	}
	{
		\ensuremath{\mathcal{Y}_{#1}}
	}
}
\DeclareDocumentCommand \errorsup{o}
{
	\IfNoValueTF {#1}
	{  
		\ensuremath{\mathcal{E}}
	}
	{
		\ensuremath{\mathcal{E}_{#1}}
	}
}
\DeclareDocumentCommand \noerrors{o}
{
	\IfNoValueTF {#1}
	{  
		\ensuremath{\mathcal{\varepsilon}}
	}
	{
		\ensuremath{\mathcal{\varepsilon}_{#1}}
	}
}
\DeclareDocumentCommand \eigenspace{o}
{
	\IfNoValueTF {#1}
	{  
		\ensuremath{\mathcal{V}}
	}
	{
		\ensuremath{\mathcal{V}_{#1}}
	}
}
\DeclareDocumentCommand \gen { ooo }
{
	\IfNoValueTF {#3}
	{  
		\IfNoValueTF {#2}
			{ 
				\IfNoValueTF {#1}
				{ 	
					\ensuremath{g(X)}	
				}
				{ 
					\ensuremath{g^{#1}(X)}	
				}
			}
			{ 
			\ensuremath{g_{#1,#2}(X)}			
			}
	}
	{
	\ensuremath{g_{#2,#3}^{#1}(X)}
	}
}
\DeclareDocumentCommand \genbar { ooo }
{
	\IfNoValueTF {#3}
	{  
		\IfNoValueTF {#2}
			{ 
				\IfNoValueTF {#1}
				{ 	
					\ensuremath{\bar{g}(X)}	
				}
				{ 
					\ensuremath{\bar{g}^{#1}(X)}	
				}
			}
			{ 
              \ensuremath{\bar{g}_{#1,#2}(X)}			
			}
	}
	{
      \ensuremath{\bar{g}_{#2,#3}^{#1}(X)}
	}
}
\DeclareDocumentCommand \genarg { moom }
{
  \IfNoValueTF {#2}
  {
    \ensuremath{g^{#1}(#4)}   
  }
  { 
    \ensuremath{g^{#1}_{#2,#3}(#4)}
  }
}
\DeclareDocumentCommand \genmat { oo }
{
  \IfNoValueTF {#2}
  { 
    \IfNoValueTF {#1}
    {   
      \ensuremath{\mathbf{G}(X)}
    }
    { 
      \ensuremath{\mathbf{G}^{#1}(X)}    
    }
  }
  { 
    \ensuremath{\mathbf{G}^{#1}(#2)}           
  }
}
\newcommand{\interval}[1]{\ensuremath{[#1)}}
\newcommand\bigzero{\makebox(0,0){\text{\huge0}}}
\newcommand\bigzerob{\hspace{-10ex}\smash{\clap{\resizebox{0.35cm}{!}{$0$}}}}
\newcommand{\diag}{\mathrm{diag}} 
\newcommand{\coset}[2]{\ensuremath{M_{#2}^{\langle #1 \rangle}}}
\newcommand{\minpoly}[2]{\ensuremath{m_{#2}^{\langle #1 \rangle}(X)}}
\newcommand{\rowop}[1]{\ensuremath{\mathsf{R}[#1]}}
\newcommand{\level}{\ensuremath{r}}
\newcommand{\basis}{\ensuremath{\tilde{K}}}
\newcommand{\map}[2]{\ensuremath{\mu(#1,#2)}}
\newcommand{\mapb}[2]{\ensuremath{\bar{\mu}(#1,#2)}}
\newcommand\minus{%
  \setbox0=\hbox{-}%
  \vcenter{%
    \hrule width\wd0 height \the\fontdimen8\textfont3%
  }%
}
\renewcommand{\QCCal}{\ensuremath{\ell}}
\begin{document}
\title{Construction of Quasi-Cyclic Product Codes}

\IEEEoverridecommandlockouts

\author{\IEEEauthorblockN{Alexander Zeh}\thanks{A. Zeh has been supported by the German research council (Deutsche Forschungsgemeinschaft, DFG) under grant Ze1016/1-1. S. Ling has been supported by NTU Research Grant M4080456.}
\IEEEauthorblockA{Computer Science Department\\
Technion---Israel Institute of Technology\\
Haifa, Israel\\
\texttt{alex@codingtheory.eu}
}
\and
\IEEEauthorblockN{San Ling}
\IEEEauthorblockA{Division of Mathematical Sciences, School of Physical \& \\
Mathematical Sciences, Nanyang Technological University\\
Singapore, Republic of Singapore\\
\texttt{lingsan@ntu.edu.sg}}
}

\maketitle

\begin{abstract}
Linear quasi-cyclic product codes over finite fields are investigated. Given the generating set in the form of a reduced Gröbner basis of a quasi-cyclic component code and the generator polynomial of a second cyclic component code, an explicit expression of the basis of the generating set of the quasi-cyclic product code is given. Furthermore, the reduced Gröbner basis of a one-level quasi-cyclic product code is derived.
\end{abstract}

\begin{IEEEkeywords}
Cyclic code, Gröbner basis, module minimization, product code, quasi-cyclic code, submodule
\end{IEEEkeywords}

\section{Introduction}
A linear block code of length $\QCCl \QCCm $ over a finite field $\Fq$ is a quasi-cyclic code if every cyclic shift of a codeword by $\QCCl$ positions, for some integer $\QCCl$ between one and $\QCCl \QCCm$, results in another codeword. Quasi-cyclic codes are a natural generalization of cyclic codes (where $\QCCl = 1$), and have a closely linked algebraic structure. In contrast to cyclic codes, quasi-cyclic codes are known to be asymptotically good (see Chen--Peterson--Weldon~\cite{chen_results_1969}). Several such codes have been discovered with the highest minimum distance for a given length and dimension (see Gulliver--Bhargava~\cite{gulliver_best_1991} as well as Chen's and Grassl's databases~\cite{chen_database_2014,Grassl_Codetables}). Several good LDPC codes are quasi-cyclic (see e.g.~\cite{butler_bounds_2013}) and the connection to convolutional codes was investigated among others in~\cite{solomon_connection_1979,esmaeili_link_1998,lally_algebraic_2006}.

Recent papers of Barbier~\textit{et al.} \cite{barbier_quasi-cyclic_2012, barbier_decoding_2013}, Lally--Fitzpatrick~\cite{lally_algebraic_2001, lally_quasicyclic_2003, lally_algebraic_2006}, Ling--Solé~\cite{ling_algebraic_2001, ling_algebraic_2003, ling_algebraic_2005}, Semenov--Trifonov~\cite{semenov_spectral_2012}, Güneri--Özbudak~\cite{guneri_bound_2012} and ours~\cite{zeh_decoding_2014} discuss different aspects of the algebraic structure of quasi-cyclic codes including lower bounds on the minimum Hamming distance and efficient decoding algorithms.

The focus of this paper is on a simple method to combine two given quasi-cyclic codes into a product code. More specifically, we give a description of a quasi-cyclic product code when one component code is quasi-cyclic and the second one is cyclic.

The work of Wasan~\cite{wasan_quasi_1977} first considers quasi-cyclic product codes while investigating the mathematical properties of the wider class of quasi-abelian codes. Some more results were published in a short note by Wasan and Dass~\cite{dass_note_1983}. Koshy proposed a so-called ``circle'' quasi-cyclic product codes in~\cite{koshy_quasi-cyclic_1972}.

Our work considers quasi-cyclic product codes that generalize the results of Burton--Weldon~\cite{burton_cyclic_1965} and Lin--Weldon~\cite{lin_further_1970} (see also~\cite[Chapter 18]{macwilliams_theory_1988}) based on the reduced Gröbner basis representation of Lally--Fitzpatrick~\cite{lally_algebraic_2001} of the quasi-cyclic component code. 
We derive a representation of the generating set of a quasi-cyclic product code, where one component code is quasi-cyclic and the other is cyclic (in Thm.~\ref{theo_QCCTimesCYC}) and we give a reduced Gröbner basis for the special class of one-level quasi-cyclic product codes (in Thm.~\ref{theo_OneLevelQC}).

The paper is structured as follows. In Section~\ref{sec_Preliminaries}, we give necessary preliminaries on quasi-cyclic codes over finite fields. We outline relevant basics of the reduced Gröbner basis representation of Lally--Fitzpatrick~\cite{lally_algebraic_2001}. Furthermore, the special class of $r$-level quasi-cyclic codes is defined in this section. Section~\ref{sec_ProductQCCQCC} contains the main result on quasi-cyclic product codes, where the row-code is quasi-cyclic and the column-code is cyclic. Moreover, an explicit expression of the basis of a 1-level quasi-cyclic product code is derived in Section~\ref{sec_ProductQCCQCC}. For illustration, we explicitly give an example of a binary $2$-quasi-cyclic product code in Section~\ref{sec_Example}.
Section~\ref{sec_Conclusion} concludes this paper.

\section{Preliminaries} \label{sec_Preliminaries}
Let $\Fq$ denote the finite field of order $q$ and $\Fqx$ the polynomial ring over $\Fq$ with indeterminate $X$. Let $a, b$ with $b > a$ be two positive integers and denote by $\interval{a,b}$ the set of integers $\{a,a+1,\dots,b-1\}$ and by $\interval{b}=\interval{0,b}$. A vector of length $n$ is denoted by a lowercase bold letter as $\vec{v} = (v_0 \ v_1 \ \cdots \ v_{n-1})$ and an $m \times n$ matrix is denoted by a capital bold letter as $\M{M}=(m_{i,j})_{i \in \interval{m}}^{j \in \interval{n}}$. 

A linear \LINQCC{\QCCl}{\QCCm}{\QCCk}{\QCCd}{q} code $\QCC$ of length $\QCCl \QCCm$, dimension $\QCCk$ and minimum Hamming distance $\QCCd$ over $\Fq$ is $\QCCcyc$-quasi-cyclic if every cyclic shift by $\QCCcyc$ of a codeword is again a codeword of $\QCC$, more explicitly if:\\[.4ex]
\begingroup
\arraycolsep=.4pt
\begin{tabular}[htb]{llll}
$ (c_{0,0} \cdots c_{\QCCcyc-1,0}$ &  $c_{0,1} \cdots c_{\QCCcyc-1,1}$ & ... & $c_{\QCCcyc-1,\QCClen-1}) \in \QCC $ \\
& $ \Rightarrow $\\
$(c_{0,\QCClen-1} \cdots c_{\QCCcyc-1,\QCClen-1}$ & $c_{0,0} \cdots c_{\QCCcyc-1,0}$ & ... & $c_{\QCCcyc-1,\QCClen-2}) \in \QCC$.
\end{tabular}\\[1ex]
\endgroup
We can represent a codeword of an \LINQCC{\QCCl}{\QCCm}{\QCCk}{\QCCd}{q} $\QCCcyc$-quasi-cyclic code as $\mathbf{c}(X) = (c_0(X) \ c_1(X) \ \cdots \ c_{\QCCcyc-1}(X)) \in \Fqx^{\ell} $, where
\begin{equation} \label{eq_UnivariatePolyCodeword}
c_i(X) \defeq \sum_{j=0}^{\QCClen-1} c_{i,j} X^{j}, \quad \forall i \in \interval{\QCCcyc}.
\end{equation}
Then, the defining property of $\QCC$ is that each component $c_i(X)$ of $\mathbf{c}(X)$ is closed under multiplication by $X$ and reduction modulo $X^{\QCClen}-1$.
\begin{lemma} \label{lem_VectorToUnivariatePoly}
Let $(c_0(X) \ c_1(X) \ \cdots \ c_{\QCCcyc-1}(X))$ be a codeword of an $\QCCcyc$-quasi-cyclic code $\QCC$ of length $\QCClen \QCCcyc$, where the components are defined as in~\eqref{eq_UnivariatePolyCodeword}. Then a codeword in $\QCC$ represented as one univariate polynomial of degree smaller than $\QCClen \QCCcyc$ is
\begin{equation} \label{eq_VectorToUnivariatePoly}
c(X) = \sum_{i=0}^{\QCCcyc-1} c_i(X^{\QCCcyc})X^i.
\end{equation}
\end{lemma}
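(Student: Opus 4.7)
The plan is straightforward: unpack the definition of $c_i(X)$ and check that the polynomial in~\eqref{eq_VectorToUnivariatePoly} has coefficient $c_{i,j}$ at the monomial $X^{j\QCCcyc+i}$, which is precisely the entry of position $j\QCCcyc+i$ in the codeword vector written out just before the lemma.

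First, I would substitute~\eqref{eq_UnivariatePolyCodeword} into the right-hand side of~\eqref{eq_VectorToUnivariatePoly}. Since $c_i(X)=\sum_{j=0}^{\QCClen-1}c_{i,j}X^j$, evaluating at $X^{\QCCcyc}$ gives $c_i(X^{\QCCcyc})=\sum_{j=0}^{\QCClen-1}c_{i,j}X^{j\QCCcyc}$, and multiplying by $X^i$ yields $\sum_{j=0}^{\QCClen-1}c_{i,j}X^{j\QCCcyc+i}$. Summing over $i\in\interval{\QCCcyc}$ produces the double sum
\[
c(X)=\sum_{i=0}^{\QCCcyc-1}\sum_{j=0}^{\QCClen-1}c_{i,j}X^{j\QCCcyc+i}.
\]

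The key observation is that the map $(i,j)\mapsto j\QCCcyc+i$ is a bijection between $\interval{\QCCcyc}\times\interval{\QCClen}$ and $\interval{\QCClen\QCCcyc}$; this is simply Euclidean division of an integer in $\interval{\QCClen\QCCcyc}$ by $\QCCcyc$. Consequently every exponent from $0$ to $\QCClen\QCCcyc-1$ appears exactly once, each with a unique coefficient $c_{i,j}$, and in particular $\deg c(X)<\QCClen\QCCcyc$.

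Finally, I would match this coefficient layout against the componentwise vector $(c_{0,0}\cdots c_{\QCCcyc-1,0},c_{0,1}\cdots c_{\QCCcyc-1,1},\ldots,c_{\QCCcyc-1,\QCClen-1})$ displayed above: reading left to right, position $j\QCCcyc+i$ contains $c_{i,j}$, which is exactly the coefficient of $X^{j\QCCcyc+i}$ computed above. Hence~\eqref{eq_VectorToUnivariatePoly} realizes the standard identification between $\Fq^{\QCClen\QCCcyc}$ and the space of polynomials in $\Fqx$ of degree less than $\QCClen\QCCcyc$ applied to the codeword. There is no real obstacle here; the lemma is essentially a notational identity, and the only bookkeeping step worth stating explicitly is the bijectivity of $(i,j)\mapsto j\QCCcyc+i$.
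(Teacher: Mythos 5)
Your proof is correct and follows essentially the same route as the paper's: both substitute~\eqref{eq_UnivariatePolyCodeword} into~\eqref{eq_VectorToUnivariatePoly} to obtain the double sum $\sum_{i=0}^{\QCCcyc-1}\sum_{j=0}^{\QCClen-1}c_{i,j}X^{j\QCCcyc+i}$. The only difference is that you explicitly record the bijectivity of $(i,j)\mapsto j\QCCcyc+i$ and the match with the vector layout, which the paper leaves implicit.
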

\begin{proof}
Substitute~\eqref{eq_UnivariatePolyCodeword} into~\eqref{eq_VectorToUnivariatePoly}:
\begin{align*}
c(X) & = \sum_{i=0}^{\QCCcyc-1} c_i(X^{\QCCcyc})X^i = \sum_{i=0}^{\QCCcyc-1} \sum_{j=0}^{\QCClen-1} c_{i,j} X^{j\QCCcyc+i}.
\end{align*}
\end{proof}
\vspace{-.3cm}
Lally and Fitzpatrick~\cite{lally_construction_1999, lally_algebraic_2001} showed that this enables us to see a quasi-cyclic code as an $R$-submodule of the algebra $R^{\QCCcyc}$, where $R = \Fqx/\langle X^{\QCClen}-1 \rangle$. The code $\QCC$ is the image of an $\Fqx$-submodule $\tilde{\QCC}$ of $\Fqx^{\QCCl}$ containing $\basis = \langle (X^{\QCCm}-1)\mathbf{e}_j, j \in \interval{\QCCl} \rangle$ (where $\mathbf{e}_j$ is the standard basis vector with one in position $j$ and zero elsewhere) under the natural homomorphism
\begin{align*}
\phi: \; \Fqx^{\QCCl} & \rightarrow  R^{\QCCcyc} \\
 (c_0(X) \  \cdots \  c_{\QCCcyc - 1}(X)) & \mapsto (c_0(X) + \langle X^{\QCClen} \minus 1 \rangle \ \cdots \\
& \qquad \qquad  c_{\QCCcyc \minus 1}(X) +\langle X^{\QCClen} \minus 1 \rangle ).
\end{align*}
It has a generating set of the form $\{ \mathbf{a}_i, i \in \interval{z}, (X^{\QCCm}-1)\mathbf{e}_j, j \in \interval{\QCCl} \}$, where $\mathbf{a}_i \in \Fqx^{\QCCl}$ and $z \leq \QCCl$ (see e.g.~\cite[Chapter 5]{cox_using_1998} for further information). Therefore, its generating set can be represented as a matrix with entries in $\Fqx$:
\begin{equation} \label{eq_GeneratorWithBasis}
\arraycolsep=1pt
\mathbf{M}(X) = 
\begin{pmatrix}
a_{0,0}(X) & a_{0,1}(X) & \cdots & a_{0,\QCCl-1}(X) \\
a_{1,0}(X) & a_{1,1}(X) & \cdots & a_{1,\QCCl-1}(X) \\
 \vdots & \vdots & \ddots & \vdots \\
a_{z-1,0}(X) & a_{z-1,1}(X) & \cdots & a_{z-1,\QCCl-1}(X) \\
X^{\QCCm}-1 &  &  \\
& X^{\QCCm}-1 & \multicolumn{2}{c}{\bigzero} \\
\multicolumn{2}{c}{\bigzero} & \ddots \\
& & & X^{\QCCm}-1
\end{pmatrix}.
\end{equation}
Every matrix $\mathbf{M}(X)$ as in~\eqref{eq_GeneratorWithBasis} of the preimage $\tilde{\QCC}$ can be transformed into a reduced Gröbner basis (RGB) with respect to the position-over-term order (POT) in $\Fqx^{\QCCcyc}$ (see~\cite{lally_construction_1999, lally_algebraic_2001}).
This basis can be represented in the form of an upper-triangular $\ell \times \ell$ matrix with entries in $\Fqx$ as follows: 
\begin{equation} \label{def_GroebBasisMatrix}
\mathbf{G}(X) =
\begin{pmatrix}
g_{0,0}(X) & g_{0,1}(X) & \cdots  & g_{0,\QCCcyc-1}(X) \\
 & g_{1,1}(X) & \cdots & g_{1,\QCCcyc-1}(X) \\
\multicolumn{2}{c}{\bigzero}& \ddots & \vdots \\
 &  & & g_{\QCCcyc-1,\QCCcyc-1}(X)
\end{pmatrix},
\end{equation}
where the following conditions must be fulfilled:\\[1ex]
\begin{tabular}[htb]{lrll}
1) & $g_{i,j}(X)$ & $= 0,$ & $\forall 0 \leq j < i < \QCCcyc$,\\
2) & $\deg g_{j,i}(X)$ & $ < \deg g_{i,i}(X),$ & $ \forall j < i, i \in \interval{\QCCcyc}$,\\
3) & $g_{i,i}(X)$ & $| \hspace{.2cm}  (X^{\QCClen}-1),$ & $\forall i \in \interval{\QCCcyc}$,\\
4) & if $g_{i,i}(X)$ & $=X^{\QCClen}-1$ then \\ 
& $g_{i,j}(X)$ & $=0,$ & $ \forall j \in \interval{i+1,\QCCcyc}$.
\end{tabular}\\[1ex]
The rows of $\mathbf{G}(X)$  with $g_{i,i}(X) \neq X^{\QCClen}-1$ (i.e., the rows that do not map to zero under $\phi$) are called the reduced generating set of the quasi-cyclic code $\QCC$.
A codeword of $\QCC$ can be represented as $\mathbf{c}(X) = \mathbf{i}(X) \mathbf{G}(X)$ and it follows that $\QCCk = \QCClen \QCCcyc - \sum_{i=0}^{\QCCcyc-1} \deg g_{i,i}(X)$.
Let us recall the following definition (see also~\cite[Thm. 3.2]{lally_construction_1999}).
\begin{definition}[$\level$-level Quasi-Cyclic Code] \label{def_LevelQC}
We call an $\QCCl$-quasi-cyclic code $\QCC$ of length $\QCCl \QCCm$ an $\level$-level quasi-cyclic code if there is an index $\level \in \interval{\QCCl}$ for which the RGB/POT matrix as defined in~\eqref{def_GroebBasisMatrix} is such that $g_{\level-1,\level-1}(X) \neq X^{\QCCm}-1$ and $g_{\level,\level}(X) = \dots = g_{\QCCl-1,\QCCl-1}(X) = X^{\QCCm}-1$.
\end{definition}
We recall \cite[Corollary 3.3]{lally_construction_1999} for the case of a $1$-level quasi-cyclic code in the following.
\begin{corollary}[$1$-level Quasi-Cyclic Code] \label{cor_OneLevelQC}
The generator matrix in RGB/POT form of a $1$-level $\QCCl$-quasi-cyclic code $\QCC$ of length $\QCCl \QCCm$ is:
\begin{equation*}
\mathbf{G}(X) =
\begin{pmatrix}
g(X) & g(X) f_{1}(X)  & \cdots  & g(X)f_{\QCCl-1}(X)
\end{pmatrix},
\end{equation*}
where $g(X) | (X^{\QCCm}-1)$ and $f_{1}(X), \dots, f_{\QCCl-1}(X) \in \Fqx$.
\end{corollary}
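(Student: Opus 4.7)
The plan is to read off the shape of $\mathbf{G}(X)$ directly from the RGB/POT conditions together with Definition~\ref{def_LevelQC}, and then use the fact that the preimage submodule $\tilde{\QCC}$ must contain $(X^{\QCCm}-1)\mathbf{e}_0$ to force a divisibility relation between the entries of the first row.

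First, I would apply Definition~\ref{def_LevelQC} with $\level = 1$: we have $g_{0,0}(X) \neq X^{\QCCm}-1$ and $g_{i,i}(X) = X^{\QCCm}-1$ for every $i \in \interval{1,\QCCl}$. Condition~4) of the RGB/POT form then kills every off-diagonal entry in rows $1,\dots,\QCCl-1$, so
\begin{equation*}
\mathbf{G}(X) =
\begin{pmatrix}
g_{0,0}(X) & g_{0,1}(X) & \cdots & g_{0,\QCCl-1}(X) \\
0 & X^{\QCCm}-1 & & \\
\vdots & & \ddots & \\
0 & & & X^{\QCCm}-1
\end{pmatrix}.
\end{equation*}
Rows $1,\dots,\QCCl-1$ map to zero under $\phi$, so only the first row survives in the reduced generating set. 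Writing $g(X) \defeq g_{0,0}(X)$, condition~3) of RGB/POT immediately gives $g(X) \mid (X^{\QCCm}-1)$.

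It remains to show that $g(X)$ also divides every off-diagonal entry $g_{0,j}(X)$ for $j \in \interval{1,\QCCl}$; this is the main (and only nontrivial) obstacle. I would argue as follows. Let $h(X) \defeq (X^{\QCCm}-1)/g(X) \in \Fqx$. Since $\tilde{\QCC}$ contains the standard basis element $(X^{\QCCm}-1)\mathbf{e}_0$, and since $\mathbf{G}(X)$ generates $\tilde{\QCC}$, the vector $(X^{\QCCm}-1,0,\dots,0)$ must lie in the $\Fqx$-row-span of $\mathbf{G}(X)$. The only way to produce the leading entry $X^{\QCCm}-1$ is to scale the first row by $h(X)$, yielding
\begin{equation*}
h(X)\cdot\bigl(g(X),\,g_{0,1}(X),\,\dots,\,g_{0,\QCCl-1}(X)\bigr) = \bigl(X^{\QCCm}-1,\,h(X)g_{0,1}(X),\,\dots,\,h(X)g_{0,\QCCl-1}(X)\bigr),
\end{equation*}
and then to subtract $\Fqx$-multiples of the remaining rows $(X^{\QCCm}-1)\mathbf{e}_j$. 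This cancellation forces $(X^{\QCCm}-1) \mid h(X)g_{0,j}(X)$ for each $j\geq 1$, which after dividing by $h(X)$ yields $g(X) \mid g_{0,j}(X)$.

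Defining $f_j(X) \defeq g_{0,j}(X)/g(X)$ for $j \in \interval{1,\QCCl}$ then gives the claimed form. The degree inequalities from condition~2) do not need to be invoked explicitly, but they ensure $\deg f_j(X) < \deg h(X) = \QCCm - \deg g(X)$, so the representation is unique modulo $X^{\QCCm}-1$. I expect the only delicate point to be the explicit manipulation showing that $(X^{\QCCm}-1)\mathbf{e}_0$ must in fact be realized by scaling the first row of $\mathbf{G}(X)$ and reducing by the lower rows; everything else is essentially bookkeeping on the RGB/POT conditions.
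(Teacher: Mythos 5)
Your proof is correct. Note that the paper does not actually prove this corollary---it is recalled verbatim from \cite[Corollary 3.3]{lally_construction_1999}---so there is no in-paper argument to compare against; your derivation is a sound, self-contained reconstruction. The two ingredients you use are exactly the right ones: Definition~\ref{def_LevelQC} with $\level=1$ together with RGB/POT condition~4) collapses rows $1,\dots,\QCCl-1$ to $(X^{\QCCm}-1)\mathbf{e}_j$, and the containment $(X^{\QCCm}-1)\mathbf{e}_0 \in \basis \subseteq \tilde{\QCC}$ forces, by looking at the zeroth coordinate of any $\Fqx$-combination of the rows (only row $0$ contributes there), the multiplier of row $0$ to be $h(X)=(X^{\QCCm}-1)/g(X)$, whence the vanishing of the remaining coordinates gives $(X^{\QCCm}-1)\mid h(X)g_{0,j}(X)$ and so $g(X)\mid g_{0,j}(X)$. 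The concluding remark on degrees via condition~2) is also accurate, since $\deg g_{0,j}(X) < \deg g_{j,j}(X) = \QCCm$ for $j\geq 1$.
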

To describe quasi-cyclic codes explicitly, we need to recall the following facts of \textit{cyclic} codes. A $q$-cyclotomic coset $\coset{i}{\QCClen}$ is defined as:
$ \coset{i}{\QCClen} \defeq \big\{ iq^j \mod \QCClen \, \vert \, j \in \interval{a} \big\}$,
where $a$ is the smallest positive integer such that $iq^{a} \equiv i \bmod \QCClen$.
The minimal polynomial in $\Fqx$ of the element $\alpha^i \in \F{q^{\QCCext}}$ is given by
\begin{equation} \label{eq_MinPoly}
\minpoly{i}{\QCClen} = \prod_{j \in \coset{i}{\QCClen} } (X-\alpha^j).
\end{equation}
The following fact is used in Section~\ref{sec_ProductQCCQCC}.
\begin{fact} \label{fact_ModuloBlaBla}
Let four nonzero integers $y, a, \ell, m$ be such that 
\begin{equation*} 
y \equiv a \ell \mod m \ell
\end{equation*}
holds. Then $\ell \mid y$ and $y/\ell \equiv a \mod m$.
\end{fact}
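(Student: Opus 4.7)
The plan is to unpack the congruence $y \equiv a\ell \pmod{m\ell}$ directly into its definition as divisibility by $m\ell$, and then factor out $\ell$ to read off both conclusions simultaneously. The argument is essentially a one-line manipulation, so the main ``obstacle'' is simply to present it cleanly; there is no deep step.

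First I would rewrite the hypothesis as the existence of an integer $k$ with
\begin{equation*}
y - a\ell = k \cdot m\ell,
\end{equation*}
equivalently
\begin{equation*}
y = a\ell + k m \ell = \ell\,(a + km).
\end{equation*}
From this factorization the first claim $\ell \mid y$ is immediate, since $a + km$ is an integer and $y$ is expressed as $\ell$ times an integer.

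Dividing both sides of $y = \ell(a+km)$ by the nonzero integer $\ell$ gives $y/\ell = a + km$, which rearranges to $y/\ell - a = km$, i.e.\ $m \mid (y/\ell - a)$. By definition of congruence this is exactly $y/\ell \equiv a \pmod{m}$, establishing the second claim. Both conclusions therefore follow from the single factoring step, and no case analysis on the signs of $y, a, \ell, m$ is needed because the divisibility and congruence relations are insensitive to signs. Nonvanishing of $\ell$ (and $m$) is only used to ensure that the division $y/\ell$ and the modulus $m$ make sense.
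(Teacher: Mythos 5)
Your proof is correct: unpacking the congruence as $y - a\ell = km\ell$, factoring to $y = \ell(a+km)$, and reading off both conclusions is exactly the right (and essentially the only) argument. The paper states this as a Fact without proof, so there is nothing to compare against; your write-up supplies the standard justification cleanly.
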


\section{Quasi-Cyclic Product Code} \label{sec_ProductQCCQCC}
Throughout this section we consider a linear product code $\QCCa \otimes \QCCb$, where $\QCCa$ is
the row-code and $\QCCb$ the column-code, respectively.
Furthermore,  w.l.o.g. let $\QCCa$ be an \LINQCC{\QCCal}{\QCCam}{\QCCak}{\QCCad}{q} $\QCCal$-quasi-cyclic code
with reduced Gröbner basis in POT form as defined in~\eqref{def_GroebBasisMatrix}:
\begin{equation} \label{eq_GroebMatrixCodeA}
\genmat[A] = 
\begin{pmatrix} 
\gen[A][0][0] & \gen[A][0][1] & \cdots & \gen[A][0][\QCCal-1]\\
 & \gen[A][1][1] & \cdots & \gen[A][1][\QCCal-1] \\
\multicolumn{2}{c}{\bigzero}& \ddots & \vdots \\
 &  & & \gen[A][\QCCal-1][\QCCal-1] 
\end{pmatrix},
\end{equation}
and let $\QCCb$ be an \LIN{\QCCbm}{\QCCbk}{\QCCbd}{q} cyclic code with
generator polynomial $\gen[B]$ of degree $\QCCbm - \QCCbk $.

Throughout the paper, we assume that $\gcd(\QCCal \QCCam, \QCCbm) = 1$ and we furthermore assume that the two integers $\inta$ and $\intb$ are such that 
\begin{equation} \label{eq_BEzoutRel}
\inta \QCCal \QCCam + \intb \QCCbm = 1.
\end{equation}
We recall the lemma of Wasan~\cite{wasan_quasi_1977}, that generalizes the result of Burton--Weldon~\cite[Theorem I]{burton_cyclic_1965} for cyclic product codes to the case of an $\QCCal$-quasi-cyclic product code of an $\QCCal$-quasi-cyclic code $\QCCa$ and a cyclic code $\QCCb$.
A codeword of $\QCCa \otimes \QCCb$ represented as univariate polynomial $c(X)$ can then be obtained from the matrix representation  $(m_{i,j})_{i \in \interval{\QCCbm}}^{j \in \interval{\QCCal \QCCam}}$ as follows:
\begin{equation} \label{eq_OneUnivariatePolyProduct}
c(X) \equiv \sum_{i=0}^{\QCCbm-1} \sum_{j=0}^{\QCCal \QCCam-1}  m_{i,j} X^{\map{i}{j}}  \mod X^{\QCCal \QCCam \QCCbm}-1, 
\end{equation}
where 
\begin{equation} \label{def_MappingMatrixPolyQCCQCC}
\map{i}{j} \defeq i \inta \QCCal  \QCCam \QCCal + j \intb \QCCbm  \mod  \QCCal \QCCam \QCCbm.
\end{equation}
\begin{lemma}[Mapping to a Univariate Polynomial~\cite{wasan_quasi_1977}] \label{lem_MappingToUnivariatePolyQCC}
Let $\QCCa$ be an $\QCCal$-quasi-cyclic code of length $\QCCal \QCCam$ and let $\QCCb$ be a cyclic code of length $\QCCbm$. The product code $\QCCa \otimes \QCCb$ is an $\QCCal$-quasi-cyclic code of length $\QCCal \QCCam \QCCbm$ if $\gcd(\QCCal\QCCam, \QCCbm) = 1$.
\end{lemma}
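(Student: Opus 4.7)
The plan is to exploit the Bézout relation in~\eqref{eq_BEzoutRel} together with the Chinese Remainder Theorem so that the claim reduces to the separate behaviour of rows and columns of the matrix representation $(m_{i,j})_{i \in \interval{\QCCbm}}^{j \in \interval{\QCCal \QCCam}}$.

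First, I would verify that the mapping in~\eqref{def_MappingMatrixPolyQCCQCC} is a bijection from $\interval{\QCCbm} \times \interval{\QCCal\QCCam}$ onto $\interval{\QCCal\QCCam\QCCbm}$. From $\inta \QCCal \QCCam + \intb \QCCbm = 1$ one reads off $\intb \QCCbm \equiv 1 \pmod{\QCCal\QCCam}$ and $\inta \QCCal \QCCam \equiv 1 \pmod{\QCCbm}$, so $\map{i}{j} \equiv j \pmod{\QCCal\QCCam}$ and $\map{i}{j} \equiv i \pmod{\QCCbm}$. Since $\gcd(\QCCal\QCCam,\QCCbm)=1$, the CRT then yields the bijection, and in particular the length of $\QCCa \otimes \QCCb$ is $\QCCal\QCCam\QCCbm$.

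Next I would show that the cyclic shift by $\QCCal$ positions of a univariate codeword $c(X)=\sum_{i,j} m_{i,j} X^{\map{i}{j}}$ remains a codeword of $\QCCa \otimes \QCCb$. Because $\map{\cdot}{\cdot}$ is a bijection, the congruence $\map{i}{j}+\QCCal \equiv \map{i'}{j'} \pmod{\QCCal\QCCam\QCCbm}$ holds precisely when $i' \equiv i + \QCCal \pmod{\QCCbm}$ and $j' \equiv j + \QCCal \pmod{\QCCal\QCCam}$; the latter reduces to an ordinary shift of $j$ by $\QCCal$ inside $\interval{\QCCal\QCCam}$ since $\QCCal \mid \QCCal\QCCam$, and Fact~\ref{fact_ModuloBlaBla} makes the divisibility bookkeeping clean. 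Consequently, the $\QCCal$-shifted polynomial corresponds to the matrix $(m'_{i,j})$ whose rows are the rows of $(m_{i,j})$ cyclically shifted by $\QCCal$ positions and whose columns are the columns cyclically shifted by $\QCCal \bmod \QCCbm$ positions.

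Finally, the rows of $(m_{i,j})$ lie in $\QCCa$, and since $\QCCa$ is $\QCCal$-quasi-cyclic each shifted row is still in $\QCCa$; the columns lie in $\QCCb$, and since $\QCCb$ is cyclic each shifted column is still in $\QCCb$. Hence $(m'_{i,j})$ is again a codeword of $\QCCa \otimes \QCCb$, so its univariate image lies in the product code, establishing $\QCCal$-quasi-cyclicity. The only step requiring careful attention, and the main bookkeeping obstacle, is the simultaneous compatibility of the shift under the CRT bijection in both the row and the column directions; once that is verified, the two quasi-cyclic, respectively cyclic, properties of the component codes plug in directly.
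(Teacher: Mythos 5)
Your argument is essentially the paper's: both hinge on the identity $\map{i+1}{j+\QCCal} \equiv \map{i}{j} + \QCCal \pmod{\QCCal\QCCam\QCCbm}$, which follows from the B\'ezout relation~\eqref{eq_BEzoutRel}, and then invoke the $\QCCal$-quasi-cyclicity of $\QCCa$ on the rows and the cyclicity of $\QCCb$ on the columns; your explicit CRT verification that $\map{\cdot}{\cdot}$ is a bijection is a useful addition that the paper leaves implicit. There is, however, a slip: you have misread~\eqref{def_MappingMatrixPolyQCCQCC}, whose first term is $i \inta \QCCal \QCCam \QCCal$, carrying an extra factor $\QCCal$. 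Hence $\map{i}{j} \equiv i\QCCal \pmod{\QCCbm}$, not $\equiv i$. Bijectivity survives because $\gcd(\QCCal,\QCCbm)=1$ makes multiplication by $\QCCal$ invertible modulo $\QCCbm$, but the shift condition in the column direction becomes $i'\QCCal \equiv i\QCCal + \QCCal \pmod{\QCCbm}$, i.e.\ $i' \equiv i+1 \pmod{\QCCbm}$: the columns are cyclically shifted by exactly one position (matching the paper's ``shift by $\QCCal$ in each row and a shift by one in each column''), not by $\QCCal \bmod \QCCbm$ as you state. Since $\QCCb$ is cyclic, either amount of column shift keeps the columns in $\QCCb$, so your conclusion is unaffected, but the intermediate congruences should be corrected.
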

\begin{proof}
Let $(m_{i,j})_{i \in \interval{\QCCbm}}^{j \in \interval{\QCCal \QCCam}}$ be a codeword of the product code $\QCCa \otimes \QCCb$, where each row is a codeword of $\QCCa$ and each column is a codeword of $\QCCb$. The entry $m_{i,j}$ is the coefficient $c_{\map{i}{j}}$ of the codeword $\sum_{i} c_i X^i$ as in~\eqref{eq_OneUnivariatePolyProduct}.
In order to prove that $\QCCa \otimes \QCCb$ is $\QCCal$-quasi-cyclic it is sufficient to show that a shift by $\QCCal$ positions of a codeword serialized to a univariate polynomial by~\eqref{def_MappingMatrixPolyQCCQCC} of $\QCCa \otimes \QCCb$ is again a codeword of $\QCCa \otimes \QCCb$. 

A shift by $\QCCal$ in each row and a shift by one each column clearly gives a codeword in $\QCCa \otimes \QCCb$, because $\QCCa$ is $\QCCal$-quasi-cyclic and $\QCCb$ is cyclic. 
With
\begin{align*}
& \map{i+1}{j+\QCCal} \\
& \equiv (i+1) \inta \QCCal \QCCam \QCCal + (j+\QCCal) \intb \QCCbm \mod \QCCal \QCCam \QCCbm \\
& \equiv i \inta \QCCal \QCCam \QCCal + j \intb \QCCbm + \QCCal  (\inta \QCCal \QCCam  + \intb \QCCbm) \mod \QCCal \QCCam \QCCbm \\
& \equiv \map{i}{j} + \QCCal  \mod \QCCal \QCCam \QCCbm, 
\end{align*}
we obtain an $\QCCal$-quasi-cyclic shift of the univariate codeword obtained by~\eqref{eq_OneUnivariatePolyProduct} and~\eqref{def_MappingMatrixPolyQCCQCC}.
\end{proof}
Instead of representing a codeword of $\QCCa \otimes \QCCb$ as one univariate polynomial as in~\eqref{eq_OneUnivariatePolyProduct}, we want to represent it as $\QCCal$ univariate polynomials as defined in~\eqref{eq_UnivariatePolyCodeword}. 
\begin{lemma}[Mapping to $\QCCal$ Univariate Polynomials] \label{lem_MappingUnivariateQCCQCC}
Let $\QCCa$ be an $\QCCal$-quasi-cyclic code of length $\QCCal \QCCam$ and let $\QCCb$ be a cyclic code of length $\QCCbm$.
Let the matrix $(m_{i,j})_{i \in \interval{\QCCbm}}^{j \in \interval{\QCCal \QCCam}}$ be a codeword of $\QCCa \otimes \QCCb$, where each row is in $\QCCa$ and each column is in $\QCCb$. The $\QCCal $ univariate polynomials of the corresponding codeword $(c_{0}(X) \ c_{1}(X) \ \cdots \ c_{\QCCal-1}(X)) $, where each component is defined as in~\eqref{eq_UnivariatePolyCodeword}, are given by:
\begin{equation} \label{eq_ComponentExpression}
\begin{split}
c_{h}(X) & \equiv X^{h(-\inta \QCCam)} \cdot \sum_{i=0}^{\QCCbm-1} \sum_{j=0}^{\QCCam-1} m_{i,j\QCCal+h} X^{ \mapb{i}{j}} \\
& \qquad \qquad \mod X^{\QCCam \QCCbm} - 1, \quad \forall h \in \interval{\QCCal},
\end{split}
\end{equation}
where
\begin{equation} \label{eq_MappingSubCodewordQCCQCC}
\mapb{i}{j} \equiv i \inta \QCCal \QCCam + j \intb \QCCbm \mod \QCCam \QCCbm.
\end{equation}
\end{lemma}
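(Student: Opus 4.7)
The plan is to start from the single-polynomial form $c(X) \equiv \sum_{i,j} m_{i,j} X^{\map{i}{j}} \pmod{X^{\QCCal\QCCam\QCCbm}-1}$ of~\eqref{eq_OneUnivariatePolyProduct} and match it, term by term, against the decomposition $c(X) = \sum_{h=0}^{\QCCal-1} c_h(X^{\QCCal}) X^h$ from Lemma~\ref{lem_VectorToUnivariatePoly}. What I need to show is that for each fixed $h$, the monomials $X^{\map{i}{j}}$ whose exponent is $\equiv h \pmod{\QCCal}$ assemble into $X^h c_h(X^{\QCCal})$, with $c_h$ given by~\eqref{eq_ComponentExpression}.

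The first step is to compute $\map{i}{j} \bmod \QCCal$. The term $i\inta\QCCal\QCCam\QCCal$ in the definition~\eqref{def_MappingMatrixPolyQCCQCC} is divisible by $\QCCal$, and the Bezout identity~\eqref{eq_BEzoutRel} gives $\intb\QCCbm \equiv 1 \pmod{\QCCal}$, so $\map{i}{j} \equiv j \pmod{\QCCal}$. Writing $j = j'\QCCal + h$ with $j' \in \interval{\QCCam}$, it follows that the $(i,j)$-term is precisely the one that contributes to the component $c_h$. Substituting and using $h\intb\QCCbm = h - h\inta\QCCal\QCCam$ (again from~\eqref{eq_BEzoutRel}) to pull out a common factor $\QCCal$, the exponent rearranges as
\[
\map{i}{j'\QCCal+h} \equiv h + \QCCal\bigl(i\inta\QCCal\QCCam + j'\intb\QCCbm - h\inta\QCCam\bigr) \pmod{\QCCal\QCCam\QCCbm}.
\]

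Matching this against $X^h c_h(X^{\QCCal})$ and cancelling the common factor $X^h$ produces an identity of the shape $c_h(X^{\QCCal}) \equiv \sum_{i,j'} m_{i,j'\QCCal+h}\, X^{\QCCal(\cdots)}$ modulo $X^{\QCCal\QCCam\QCCbm}-1$, in which every exponent is a multiple of $\QCCal$. The final step is to invoke Fact~\ref{fact_ModuloBlaBla} with $\ell = \QCCal$ and $m = \QCCam\QCCbm$ to pass from this identity in the variable $X^{\QCCal}$ modulo $X^{\QCCal\QCCam\QCCbm}-1$ to the claimed identity~\eqref{eq_ComponentExpression} in $X$ modulo $X^{\QCCam\QCCbm}-1$; the prefix $X^{h(-\inta\QCCam)}$ falls out as the common factor coming from the $-h\inta\QCCam$ summand inside the parentheses, while the remaining piece is exactly $\mapb{i}{j'}$ as defined in~\eqref{eq_MappingSubCodewordQCCQCC}. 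The main subtlety lies precisely in this last transition: one must verify that after dividing exponents by $\QCCal$ the reduction modulo $\QCCal\QCCam\QCCbm$ becomes a reduction modulo $\QCCam\QCCbm$ without ambiguity, which is what Fact~\ref{fact_ModuloBlaBla} is designed to deliver. The remainder of the argument is bookkeeping with the Bezout identity.
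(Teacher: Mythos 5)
Your proposal is correct and rests on exactly the same ingredients as the paper's proof: the exponent identity $\map{i}{j'\QCCal+h} = h + \QCCal\bigl(\mapb{i}{j'} - h\inta\QCCam\bigr)$, Fact~\ref{fact_ModuloBlaBla} to pass between the moduli $\QCCal\QCCam\QCCbm$ and $\QCCam\QCCbm$, and the decomposition $c(X)=\sum_h c_h(X^{\QCCal})X^h$ of Lemma~\ref{lem_VectorToUnivariatePoly}. The only (cosmetic) difference is direction: you derive the components by sorting the monomials of~\eqref{eq_OneUnivariatePolyProduct} according to their exponent residue mod $\QCCal$ (using $\map{i}{j}\equiv j \pmod{\QCCal}$), whereas the paper verifies that the claimed components reassemble into~\eqref{eq_OneUnivariatePolyProduct}.
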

\begin{proof}
From Fact~\ref{fact_ModuloBlaBla} we have for the exponents in~\eqref{eq_ComponentExpression}:
\begin{align} 
& \mapb{i}{j} +  h ( -\inta \QCCam)  \equiv i \inta \QCCal \QCCam + j \intb \QCCbm \mod \QCCam \QCCbm  \nonumber \\
& \Leftrightarrow \nonumber \\
& \QCCal \big( \mapb{i}{j} +  h ( -\inta \QCCam ) \big)  \nonumber \\
& \equiv \QCCal(i \inta \QCCal \QCCam + j \intb \QCCbm +  h ( -\inta \QCCam ) )  \mod \QCCal \QCCam \QCCbm.  \label{eq_FinalBigToSmallMatrix}
\end{align}
With $ -\inta \QCCal \QCCam = \intb \QCCbm-1  $, we can rewrite~\eqref{eq_FinalBigToSmallMatrix}:
\begin{align}
\QCCal \big( \mapb{i}{j} +  h ( -\inta \QCCam ) \big) & = \QCCal \mapb{i}{j} + \QCCal h ( -\inta \QCCam ) \nonumber  \\ 
& = \QCCal \mapb{i}{j} + h \intb \QCCbm  -h , \nonumber 
\end{align}
and this gives with $\mapb{i}{j}$ as in~\eqref{eq_MappingSubCodewordQCCQCC} and $\map{i}{j}$ as in~\eqref{def_MappingMatrixPolyQCCQCC}:
\begin{align}
& \QCCal \mapb{i}{j} +  h \intb \QCCbm  - h \nonumber  \\
& \equiv \QCCal (i \inta \QCCal \QCCam  + j \intb \QCCbm) + h \intb \QCCbm - h    \nonumber \\
& \equiv  \QCCal i \inta \QCCal \QCCam + (j \QCCal+h) \intb \QCCbm - h \mod \QCCal \QCCam \QCCbm \nonumber \\
& = \map{i}{j\QCCal+h} - h. \label{eq_ExpressionForComponennt}
\end{align}
Inserting~\eqref{eq_ExpressionForComponennt} in~\eqref{eq_VectorToUnivariatePoly} of  Lemma~\ref{lem_VectorToUnivariatePoly} leads to:
\begin{align}
c(X) & = \sum_{h=0}^{\QCCal-1} c_h(X^{\QCCal}) X^h \nonumber \\
 & = \sum_{h=0}^{\QCCal-1} \sum_{i=0}^{\QCCbm-1} \sum_{j=0}^{\QCCam-1} m_{i,j\QCCal+h} X^{\map{i}{j\QCCal+h}} \nonumber \\
& = \sum_{i=0}^{\QCCbm-1} \sum_{j=0}^{\QCCal \QCCam-1} m_{i,j} X^{\map{i}{j}},
\end{align}
which equals~\eqref{eq_OneUnivariatePolyProduct}.
\end{proof}
The mapping $\mapb{i}{j}$ from~\eqref{eq_MappingSubCodewordQCCQCC} of the $\QCCal$ submatrices $(m_{i,j\QCCal})_{i \in \interval{\QCCbm}}^{j \in \interval{\QCCam}}, (m_{i,j\QCCal+1})_{i \in \interval{\QCCbm}}^{j \in \interval{\QCCam}}, \dots, (m_{i,j\QCCal+\QCCal-1})_{i \in \interval{\QCCbm}}^{j \in \interval{\QCCam}}$ to the $\QCCal$ univariate polynomials $c_0(X), c_1(X), \dots, c_{\QCCal-1}(X) $ is the same as the one used to map the codeword of a cyclic product code from its matrix representation to a polynomial representation (see \cite[Thm. 1]{burton_cyclic_1965}).

In Fig.~\ref{fig_CyclicQCC}, we illustrate the $\map{i}{j}$ as in~\eqref{def_MappingMatrixPolyQCCQCC} for $\inta=1$, $\QCCal = 2$, $\QCCam = 17$ and $\intb=-11$, $\QCCbm=3$. Subfigure~\ref{fig_QCCCYCfull} shows the values of $\map{i}{j}$. The two submatrices $(m_{i,j2})$ and $(m_{i,j2+1})$ for $i \in \interval{3}$ and $j \in \interval{17}$ are shown in Subfigure~\ref{fig_QCCCYCb}. Subfigure~\ref{fig_QCCCYCc} contains the coefficients of the two univariate polynomials $c_0(X)$ and $c_1(X)$, where $(c_0(X) \ c_1(X))$ is a codeword of the $2$-quasi-cyclic product code of length $102$.
\newcommand{\mysize}{7mm}
\begin{figure*}[htb]
\centering
\subfigure[The $3 \times (2 \cdot 17)$ codeword  matrix $(m_{i,j})$ of the $2$-quasi-cyclic product code $\QCCa \otimes \QCCb$. Each entry contains the index of the coefficient $c_i$ of the univariate polynomial $c(X) = \sum_{i=0}^{101} c_i X^i \in \QCCa \otimes \QCCb $.]{\resizebox{1.8\columnwidth}{!}{\includegraphics{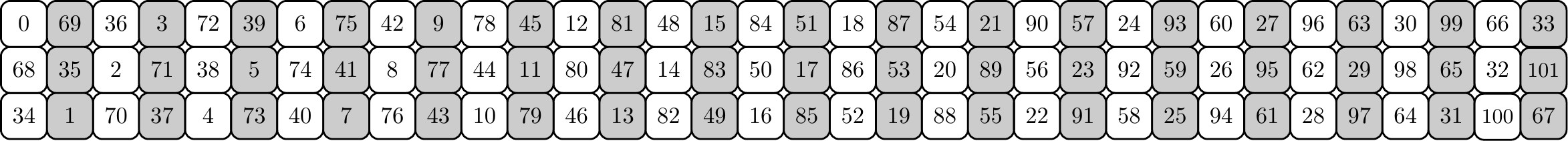}}\label{fig_QCCCYCfull}}\\
\subfigure[The two submatrices $(m_{i,2j})_{i \in \interval{3}}^{i \in \interval{17}}$ and $(m_{i,2j+1})_{i \in \interval{3}}^{i \in \interval{17}}$ with entries that are the coefficients of $c_0(X^2)$ and $X c_1(X^2)$.]{\tikzsetnextfilename{Product17-3-b}\resizebox{1.8\columnwidth}{!}{\includegraphics{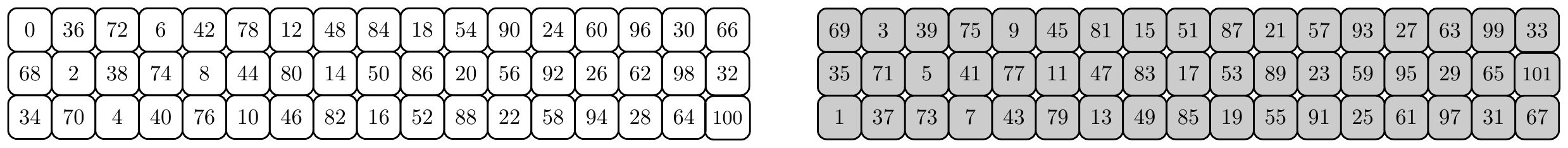}}\label{fig_QCCCYCb}}\\
\subfigure[The left submatrix contains the coefficients $c_{0,i}$ of the univariate polynomials $c_0(X)$ (the right one contains $c_{1,i}$ of $c_1(X)$, respectively).]{\tikzsetnextfilename{Product17-3-c}\resizebox{1.8\columnwidth}{!}{\includegraphics{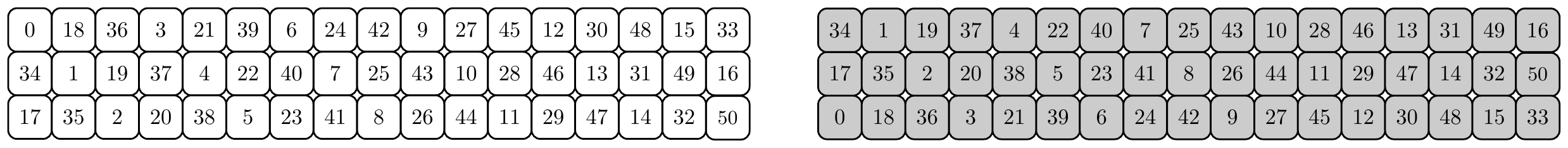}}\label{fig_QCCCYCc}}
\label{fig_CyclicQCC}
\caption{Illustration of the mapping $\map{i}{j}$ (as defined in~\eqref{def_MappingMatrixPolyQCCQCC}) from a codeword of a quasi-cyclic product code represented as matrix to a polynomial representation. The product code $\QCCa \otimes \QCCb$ is $2$-quasi-cyclic. The row-code $\QCCa$ is $2$-quasi-cyclic and has length $\QCCal\QCCam = 2 \cdot 17$ and the column-code $\QCCb$ is cyclic and has length $\QCCbm = 3$ (Subfigure~\ref{fig_QCCCYCfull}, here $\inta = 1$ and $\intb = -11$). The mapping $\mapb{i}{j}$ (as in~\eqref{eq_MappingSubCodewordQCCQCC}) to two univariate polynomials is illustrated in Subfigure~\ref{fig_QCCCYCb} and Subfigure~\ref{fig_QCCCYCc}.}
\end{figure*}

The following theorem gives the basis representation of a quasi-cyclic product code, where the row-code is quasi-cyclic and the column-code is cyclic.
\begin{theorem}[Quasi-Cyclic Product Code] \label{theo_QCCTimesCYC}
Let $\QCCa$ be an $\LINQCC{\QCCal}{\QCCam}{\QCCak}{\QCCad}{q}$ $\QCCal$-quasi-cyclic code with generator matrix $\genmat[A] \in \Fqx^{\QCCal \times \QCCal}$  as in~\eqref{eq_GroebMatrixCodeA} and let $\QCCb$ be an $\LIN{\QCCbm}{\QCCbk}{\QCCbd}{q}$ cyclic code with generator polynomial $\gen[B] \in \Fqx$.

Then the $\QCCal$-quasi-cyclic product code $\QCCa \otimes \QCCb$ has a generating matrix of the following (unreduced) form:
\begin{equation} \label{eq_UnReducedBasis}
\genmat = 
\begin{pmatrix}
\genmat[0] \\
\genmat[1] \\
\end{pmatrix},
\end{equation}
where 
\begin{equation} \label{eq_UnReducedBasisPart1}
\begin{split}
& \genmat[0] = \genarg{B}{X^{\inta \QCCal \QCCam }} \cdot \\
& \begin{pmatrix}
\genarg{A}[0][0]{X^{\intb \QCCbm}}  & \genarg{A}[0][1]{X^{\intb \QCCbm}}  & \cdots & \genarg{A}[0][\QCCal-1]{X^{\intb \QCCbm}} \\
 & \genarg{A}[1][1]{X^{\intb \QCCbm}} & \cdots & \genarg{A}[1][\QCCal-1]{X^{\intb \QCCbm}} \\
\multicolumn{2}{c}{\bigzerob} & \ddots & \vdots \\
& &  & \genarg{A}[\QCCal-1][\QCCal-1]{X^{\intb \QCCbm}} \\
\end{pmatrix},\\
& \cdot \diag \begin{pmatrix}
1, X^{-\inta \QCCam}, X^{-2 \inta \QCCam}, \dots, X^{- (\QCCal-1) \inta \QCCam} 
\end{pmatrix} 
\end{split}
\end{equation}
and  
\begin{equation} \label{eq_UnReducedBasisPart2}
\genmat[1] = (X^{\QCCam \QCCbm}-1) \mathbf{I}_{\QCCal}, 
\end{equation}
where $\mathbf{I}_{\QCCal}$ is the $\QCCal \times \QCCal$ identity matrix.
\end{theorem}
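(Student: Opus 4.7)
The plan is to exhibit, for each row of the claimed matrix $\genmat(X)$, an explicit codeword of $\QCCa \otimes \QCCb$ whose $\QCCal$-univariate representation (in the sense of Lemma~\ref{lem_MappingUnivariateQCCQCC}) matches that row, and then to argue that the $2\QCCal$ rows together generate the whole preimage submodule $\widetilde{\QCCa \otimes \QCCb} \subseteq \Fqx^{\QCCal}$. By Lemma~\ref{lem_MappingToUnivariatePolyQCC}, the product code is $\QCCal$-quasi-cyclic of length $\QCCal\QCCam\QCCbm$, so its preimage submodule must contain the vectors $(X^{\QCCam\QCCbm}-1)\mathbf{e}_j$ for all $j \in \interval{\QCCal}$; these are exactly the rows of $\genmat[1] = (X^{\QCCam\QCCbm}-1)\mathbf{I}_{\QCCal}$, which settles that block immediately.

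For the $\genmat[0]$ block I would fix $i \in \interval{\QCCal}$ and build the rank-one outer product matrix $\mathbf{M}^{(i)} = (m_{i',j})$ defined by $m_{i',j} \defeq u_{i'}\,v^{(i)}_j$, where $(u_0, \dots, u_{\QCCbm-1})$ is the coefficient vector of $\gen[B]$ (so every column of $\mathbf{M}^{(i)}$ is a scalar multiple of $\gen[B] \in \QCCb$) and $(v^{(i)}_0, \dots, v^{(i)}_{\QCCal\QCCam-1})$ is the one-polynomial expansion via Lemma~\ref{lem_VectorToUnivariatePoly} of the $i$-th row of $\genmat[A]$ (so every row of $\mathbf{M}^{(i)}$ is a scalar multiple of that codeword of $\QCCa$). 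Consequently $\mathbf{M}^{(i)} \in \QCCa \otimes \QCCb$.

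Next I would apply~\eqref{eq_ComponentExpression} of Lemma~\ref{lem_MappingUnivariateQCCQCC} to $\mathbf{M}^{(i)}$. Because Lemma~\ref{lem_VectorToUnivariatePoly} identifies $v^{(i)}_{j\QCCal+h}$ with the coefficient $g^A_{i,h,j}$ of $X^j$ in $\gen[A][i][h]$, and because $\mapb{i'}{j} \equiv i'\inta\QCCal\QCCam + j\intb\QCCbm \pmod{\QCCam\QCCbm}$ is additive in those indices, the double sum factors multiplicatively:
\begin{equation*}
c_h(X) \equiv X^{-h\inta\QCCam} \cdot \genarg{B}{X^{\inta\QCCal\QCCam}} \cdot \genarg{A}[i][h]{X^{\intb\QCCbm}} \pmod{X^{\QCCam\QCCbm}-1}.
\end{equation*}
This is exactly the $(i,h)$-entry of $\genmat[0](X)$, so the $i$-th row of $\genmat[0]$ is the univariate representation of $\mathbf{M}^{(i)}$ and lies in the preimage submodule.

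Finally I would verify spanning. Every codeword of $\QCCa \otimes \QCCb$ is an $\Fq$-linear combination of outer products $\mathbf{u}\mathbf{v}^T$ with $\mathbf{u} \in \QCCb$ and $\mathbf{v} \in \QCCa$. Since $\QCCa$ is generated as an $\Fqx$-submodule of $\Fqx^{\QCCal}$ by the rows of $\genmat[A]$ together with the $(X^{\QCCam}-1)\mathbf{e}_j$'s, and $\QCCb$ is generated as an ideal by $\gen[B]$, bilinearity combined with the $\QCCal$-quasi-cyclic shift action reduces every such outer product to an $\Fqx$-combination of the rows of $\genmat[0]$ modulo the relations in $\genmat[1]$. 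The main obstacle I anticipate is the exponent bookkeeping in the key computation above: one must verify that $\mapb{i'}{j} - h\inta\QCCam$ separates modulo $\QCCam\QCCbm$ into the row-shift contribution $i'\inta\QCCal\QCCam$ and the column-shift contribution $j\intb\QCCbm$ so that the sums over $i'$ and $j$ split cleanly into $\genarg{B}{X^{\inta\QCCal\QCCam}}$ and $\genarg{A}[i][h]{X^{\intb\QCCbm}}$. This factorization rests on the Bézout relation~\eqref{eq_BEzoutRel}, exactly as in the proof of Lemma~\ref{lem_MappingUnivariateQCCQCC}.
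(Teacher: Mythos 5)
Your proposal is correct, but it reaches the result by a genuinely different decomposition of the argument than the paper. The paper starts from an \emph{arbitrary} codeword matrix of $\QCCa \otimes \QCCb$, applies Lemma~\ref{lem_MappingUnivariateQCCQCC} to its components, and shows that each shifted component $\tilde{c}_h(X) = X^{h\inta\QCCam} c_h(X)$ is, on the one hand, a multiple of a combination of the $\genarg{A}[i][h]{X^{\intb \QCCbm}}$ (by grouping the double sum over rows) and, on the other hand, a multiple of $\genarg{B}{X^{\inta \QCCal \QCCam}}$ (by grouping over columns); it then invokes the argument of Burton--Weldon to combine these two divisibilities into divisibility by the product modulo $X^{\QCCam\QCCbm}-1$, yielding $\mathbf{c}(X) = \mathbf{i}(X)\genmat$. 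You instead exhibit explicit rank-one codewords whose images under the same map are exactly the rows of $\genmat[0]$, and get the converse inclusion from the tensor-product description of a linear product code together with bilinearity of the exponent map. Both routes hinge on the identical central computation, namely the factorization $X^{\mapb{i}{j}} = X^{i \inta \QCCal \QCCam}\cdot X^{j \intb \QCCbm}$ afforded by the B\'ezout relation~\eqref{eq_BEzoutRel}. What your version buys is that both inclusions are made explicit (the paper only argues that the code is contained in the row span of $\genmat$, leaving implicit that the rows of $\genmat[0]$ are themselves preimages of codewords) and that you bypass the delicate ``divisible by each factor implies divisible by their product modulo $X^{\QCCam\QCCbm}-1$'' step, which the paper justifies only by analogy with Burton--Weldon. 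The price is the spanning step, which you state in outline: to make it airtight you should note that a row-shift by $\QCCal$ and a column-shift by $1$ act on the components as multiplication by the monomials $X^{\intb\QCCbm}$ and $X^{\inta\QCCal\QCCam}$ respectively, and that the relations $X^{\QCCam}-1$ and $X^{\QCCbm}-1$ of the component codes land in the submodule generated by $\genmat[1]$, so that $\Fqx$-combinations upstairs descend to $\Fqx$-combinations of the rows of $\genmat[0]$ modulo $\genmat[1]$.
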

\begin{proof}
We first give an explicit expression for each component of a codeword $(c_{0}(X) \ c_{1}(X) \ \cdots \ c_{\QCCal-1}(X))$ in $\QCCa \otimes \QCCb $ depending on the components of a codeword  $(a_{0}(X) \ a_{1}(X) \ \cdots \ a_{\QCCal-1}(X) )$ of the row-code $\QCCa$ and depending the column-code $\QCCb$ based on the expression of Lemma~\ref{lem_MappingUnivariateQCCQCC}.
Let the $\QCCbm \times \QCCal \QCCam $ matrix $(m_{i,j})$ be a codeword of the $\QCCal$-quasi-cyclic product code $\QCCa \otimes \QCCb$ and let the polynomial 
\begin{equation} \label{eq_ColumnComponentQCCQCC}
a_{i,h}(X) \defeq \sum_{j=0}^{\QCCam-1} m_{i,j\QCCal+h} X^j, \quad \forall h \in \interval{\QCCal}, i \in \interval{\QCCbm}
\end{equation}
denote the $h$th component of a codeword $(a_{i,0}(X) \ a_{i,1}(X) \ \cdots \ a_{i, \QCCal-1}(X))$ in $\QCCa$ in the $i$th row of the matrix $(m_{i,j})$. Denote a codeword $b_{j}(X)$ of $\QCCb$ in the $j$th column by
\begin{equation} \label{eq_RowComponentQCCQCC}
b_{j}(X) = \sum_{i=0}^{\QCCbm-1} m_{i,j} X^i, \quad \forall j \in \interval{\QCCal \QCCam},
\end{equation}
respectively. From~\eqref{eq_ComponentExpression}, we have for the $h$th component of a codeword of the product code $ \QCCa \otimes \QCCb$:
\begin{equation} \label{eq_ProductComponentQCC}
\begin{split}
c_{h}(X) & \equiv X^{h(-\inta \QCCam)} \sum_{i=0}^{\QCCbm-1} \sum_{j=0}^{\QCCam-1} m_{i,j\QCCal+h} X^{ \mapb{i}{j} } \\ 
& \qquad \qquad    \mod X^{\QCCam \QCCbm} - 1, \quad \forall h \in \interval{\QCCal},
\end{split}
\end{equation}
and with $\mapb{i}{j}$ as in~\eqref{eq_MappingSubCodewordQCCQCC} of Lemma~\ref{lem_MappingUnivariateQCCQCC} we can write~\eqref{eq_ProductComponentQCC} explicitly:
\begin{align} 
&  c_{h}(X) \equiv X^{h(-\inta \QCCam)} \sum_{i=0}^{\QCCbm-1} \sum_{j=0}^{\QCCam-1} m_{i,j\QCCal+h}  X^{i \inta \QCCal \QCCam + j\intb \QCCbm} \nonumber \\
& \qquad \qquad \qquad \mod X^{\QCCam \QCCbm} - 1, \quad \forall h \in \interval{\QCCal}. \label{eq_ExplicitComponentProductQCC} 
\end{align}
We define a shifted component: 
\begin{equation} \label{eq_ExplicitComponentProductQCCShifted}
\tilde{c}_{h}(X) \equiv c_{h}(X) X^{h(\inta \QCCam)} \mod X^{\QCCam \QCCbm} \minus 1, \: \forall h \in \interval{\QCCal}. 
\end{equation}
Since 
\begin{align*}
\sum_{i=0}^{\QCCbm-1} \sum_{j=0}^{\QCCam-1} & m_{i,j\QCCal+h} X^{i \inta \QCCal \QCCam + j \intb \QCCbm} \\
& = \sum_{i=0}^{\QCCbm-1} X^{i \inta \QCCal \QCCam}  \sum_{j=0}^{\QCCam-1} m_{i,j\QCCal+h} X^{j \intb \QCCbm} \\
& = \sum_{i=0}^{\QCCbm-1} X^{i \inta \QCCal \QCCam}  a_{i,h}(X^{\intb \QCCbm}), \quad \forall h \in \interval{\QCCal},
\end{align*}
and from~\eqref{eq_ExplicitComponentProductQCCShifted} and in terms of the components of the row-code as defined in~\eqref{eq_ColumnComponentQCCQCC}, we obtain: 
\begin{equation} \label{eq_ExplicitComponentWithRemainderQCCQCC}
\begin{split}
& \tilde{c}_{h}(X) = q_{h}(X) (X^{\QCCam \QCCbm}-1) + \\
& \qquad \qquad \sum_{i=0}^{\QCCbm-1} X^{i \inta \QCCal \QCCam}  a_{i,h}(X^{\intb \QCCbm}), \quad  \forall h \in \interval{\QCCal},
\end{split}
\end{equation}
for some $q_{h}(X) \in \Fqx$. Therefore $\tilde{c}_{h}(X)$ is a multiple of $\sum_{i=0}^h \epsilon_{i}(X) \genarg{A}[i][h]{X^{\intb \QCCbm}}$ for some $\epsilon_i(X) \in \Fqx$.
A codeword $b_{j}(X)$ in $\QCCb$ in the $j$th column of $(m_{i,j})$ is a multiple of $\genarg{B}{X}$ and we obtain:
\begin{align*}
\sum_{i=0}^{\QCCbm-1} \sum_{j=0}^{\QCCal \QCCam-1} & m_{i,j} X^{i \inta \QCCal \QCCam+j\intb\QCCbm} \\
& = \sum_{j=0}^{\QCCal \QCCam-1}  X^{j \intb \QCCbm} \sum_{i=0}^{\QCCbm-1} m_{i,j} X^{i \inta \QCCal \QCCam}   \\
& = \sum_{j=0}^{\QCCal \QCCam-1} X^{j \intb \QCCbm}  b_{j}(X^{\inta \QCCal \QCCam}),
\end{align*}
and therefore $\tilde{c}_h(X)$ is a multiple of $\genarg{B}{X^{\inta \QCCal \QCCam}}$ modulo $X^{\QCCam \QCCbm} - 1$.

Similar to the proof of~\cite[Thm. III]{burton_cyclic_1965}, it can be shown that every shifted component $\tilde{c}_h(X)$ is a multiple of the product of $\genarg{B}{X^{\inta \QCCal \QCCam}}$ and $\sum_{i=0}^h \epsilon_{i} \genarg{A}[i][h]{X^{\intb \QCCbm}} $ modulo $(X^{\QCCam \QCCbm}-1)$. Therefore, we can represent each codeword in $\QCCa \otimes \QCCb$ as:
\begin{equation*}
\begin{split}
& \big(c_0(X) \ c_1(X) \ \cdots \ c_{\QCCal-1}(X) \big) \\
& \qquad = \big(i_0(X) \ i_1(X) \ \cdots \ i_{\QCCal-1}(X) \big) \genmat, 
\end{split}
\end{equation*}
where $\genmat$ is as in~\eqref{eq_UnReducedBasis}.
\end{proof}
The following theorem gives the reduced Gröbner basis (as defined in~\eqref{def_GroebBasisMatrix}) representation of the quasi-cyclic product code from Thm.~\ref{theo_QCCTimesCYC}, where the row-code is a 1-level quasi-cyclic code.
\begin{theorem}[1-Level Quasi-Cyclic Product Code] \label{theo_OneLevelQC}
Let $\QCCa$ be an $\LINQCC{\QCCal}{\QCCam}{\QCCak}{\QCCad}{q}$ 1-level $\QCCal$-quasi-cyclic code with generator matrix in RGB/POT form:
\begin{align}
& \genmat[A]  \nonumber \\
& = \begin{pmatrix}
\gen[A][0][0] & \hspace{.2cm} \gen[A][0][1] &  \hspace{.55cm} \cdots & \hspace{.3cm} \gen[A][0][\QCCal-1] 
\end{pmatrix} \nonumber \\
& = \begin{pmatrix}
\gen[A] & \gen[A] f_{1}^{A}(X) & \cdots & \gen[A] f_{\QCCal-1}^{A}(X)
\end{pmatrix} \label{eq_GenMatrixQCCOneLevel}
\end{align}
as shown in Corollary~\ref{cor_OneLevelQC}. Let $\QCCb$ be an $\LIN{\QCCbm}{\QCCbk}{\QCCbd}{q}$ cyclic code with generator polynomial $\gen[B] \in \Fqx$.

Then a generator matrix of the $1$-level $\QCCal$-quasi-cyclic product code in RGB/POT form is:
\begin{small}
\begin{equation*}
\begin{split}
\genmat = &  \begin{pmatrix}
\gen & \gen f_{1}^{A}(X^{\intb \QCCbm})  & \cdots & \gen f_{\QCCal-1}^{A}(X^{\intb \QCCbm })
\end{pmatrix}\\
& \cdot \diag \begin{pmatrix}
1, X^{-\inta \QCCam}, X^{-2 \inta \QCCam}, \dots, X^{- (\QCCal-1) \inta \QCCam} 
\end{pmatrix},
\end{split}
\end{equation*}
\end{small}
\hspace{-.1cm}where 
\begin{equation} \label{eq_GCDOneLevel}
\gen = \gcd \big( X^{\QCCam \QCCbm}-1, \genarg{A}{X^{\intb \QCCbm}} \genarg{B}{X^{\inta \QCCal \QCCam}} \big) .
\end{equation}
\end{theorem}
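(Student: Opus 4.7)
\emph{Plan.} My plan is to specialize Theorem~\ref{theo_QCCTimesCYC} to the 1-level case and then perform standard module reductions to bring the resulting (unreduced) generating matrix into RGB/POT form. By Corollary~\ref{cor_OneLevelQC}, only the first row of the RGB/POT matrix $\genmat[A]$ is non-trivial, so substituting into~\eqref{eq_UnReducedBasisPart1} yields a single non-trivial generator
\[
\mathbf{r}_0 \;=\; L(X)\cdot \bigl(1,\; f^A_1(X^{\intb\QCCbm})X^{-\inta\QCCam},\; \ldots,\; f^A_{\QCCal-1}(X^{\intb\QCCbm})X^{-(\QCCal-1)\inta\QCCam}\bigr),
\]
where $L(X) := \genarg{A}{X^{\intb\QCCbm}}\,\genarg{B}{X^{\inta\QCCal\QCCam}}$. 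The remaining rows of $\genmat[0]$, arising from the trivial part of $\genmat[A]$, have the shape $\genarg{B}{X^{\inta\QCCal\QCCam}}(X^{\QCCam\intb\QCCbm}-1)X^{-j\inta\QCCam}\mathbf{e}_j$; since $\QCCam\cdot\intb\cdot\QCCbm = \intb\cdot\QCCam\QCCbm$, the factor $X^{\QCCam\intb\QCCbm}-1$ is a multiple of $X^{\QCCam\QCCbm}-1$, so these rows lie in the span of $\genmat[1]=(X^{\QCCam\QCCbm}-1)\mathbf{I}_{\QCCal}$ from~\eqref{eq_UnReducedBasisPart2} and are redundant. Thus the effective unreduced generating set of $\QCCa \otimes \QCCb$ is $\mathbf{r}_0$ together with $(X^{\QCCam\QCCbm}-1)\mathbf{e}_0,\ldots,(X^{\QCCam\QCCbm}-1)\mathbf{e}_{\QCCal-1}$.

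Next I would reduce the first column via Bezout: pick $\alpha,\beta\in\Fqx$ with $\alpha L + \beta(X^{\QCCam\QCCbm}-1)=g$ as in~\eqref{eq_GCDOneLevel}, and set $\mathbf{r}_0' := \alpha\mathbf{r}_0 + \beta(X^{\QCCam\QCCbm}-1)\mathbf{e}_0$. Its leading entry is $g$, and because $\alpha L = g - \beta(X^{\QCCam\QCCbm}-1)$, the $h$-th entry equals $g\,f^A_h(X^{\intb\QCCbm})X^{-h\inta\QCCam} - \beta(X^{\QCCam\QCCbm}-1)f^A_h(X^{\intb\QCCbm})X^{-h\inta\QCCam}$, so subtracting $\beta f^A_h(X^{\intb\QCCbm})X^{-h\inta\QCCam}$ times $(X^{\QCCam\QCCbm}-1)\mathbf{e}_h$ yields exactly the first row claimed in the theorem. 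In the reverse direction, $\mathbf{r}_0=(L/g)\mathbf{r}_0'$ (valid since $g\mid L$), and with $v := (X^{\QCCam\QCCbm}-1)/g$ the row $(X^{\QCCam\QCCbm}-1)\mathbf{e}_0 - v\mathbf{r}_0'$ has its $h$-th entry ($h\geq 1$) equal to $-(X^{\QCCam\QCCbm}-1)f^A_h(X^{\intb\QCCbm})X^{-h\inta\QCCam}$, a multiple of $X^{\QCCam\QCCbm}-1$ that lies in the span of $(X^{\QCCam\QCCbm}-1)\mathbf{e}_h$ for $h\geq 1$. Hence $\{\mathbf{r}_0',\,(X^{\QCCam\QCCbm}-1)\mathbf{e}_1,\ldots,(X^{\QCCam\QCCbm}-1)\mathbf{e}_{\QCCal-1}\}$ generates the same module.

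Finally, I would verify the four conditions of~\eqref{def_GroebBasisMatrix}: the diagonal entries $g, X^{\QCCam\QCCbm}-1,\ldots,X^{\QCCam\QCCbm}-1$ all divide $X^{\QCCam\QCCbm}-1$ (condition~3); the off-diagonal entries of the first row, reduced modulo $X^{\QCCam\QCCbm}-1$, have degree strictly less than $\deg g_{h,h} = \QCCam\QCCbm$ (condition~2); and the rows with diagonal $X^{\QCCam\QCCbm}-1$ have zeros off-diagonal (condition~4), confirming the 1-level structure of Definition~\ref{def_LevelQC}. The main obstacle I anticipate is the careful handling of the monomial $X^{-h\inta\QCCam}$: it must be interpreted as a polynomial representative of the inverse of $X^{h\inta\QCCam}$ modulo $X^{\QCCam\QCCbm}-1$ (which exists because $\gcd(X,X^{\QCCam\QCCbm}-1)=1$), and one must check that, after substituting $X\mapsto X^{\intb\QCCbm}$ into $f^A_h$ and performing the clean-up modulo $X^{\QCCam\QCCbm}-1$, the resulting off-diagonal entry coincides with $g\,f^A_h(X^{\intb\QCCbm})X^{-h\inta\QCCam}$ of degree less than $\QCCam\QCCbm$ without introducing extraneous factors.
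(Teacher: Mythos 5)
Your proposal is correct and follows essentially the same route as the paper: specialize Thm.~\ref{theo_QCCTimesCYC} to the single non-trivial row, form a B\'ezout combination of that row with the rows $(X^{\QCCam\QCCbm}-1)\mathbf{e}_j$ to produce a row with leading entry $\gen=\gcd\big(X^{\QCCam\QCCbm}-1,\genarg{A}{X^{\intb\QCCbm}}\genarg{B}{X^{\inta\QCCal\QCCam}}\big)$ and off-diagonal entries $\gen f_h^A(X^{\intb\QCCbm})X^{-h\inta\QCCam}$, then discard the original row as the multiple $\genarg{A}{X^{\intb\QCCbm}}\genarg{B}{X^{\inta\QCCal\QCCam}}/\gen$ of the new one. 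Your additional checks (redundancy of the rows coming from the trivial part of $\genmat[A]$, recoverability of $(X^{\QCCam\QCCbm}-1)\mathbf{e}_0$, the RGB/POT conditions, and the interpretation of the negative exponents modulo $X^{\QCCam\QCCbm}-1$) only make explicit what the paper leaves implicit.
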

\begin{proof}
Let two polynomials $u_0(X), v_0(X) \in \Fqx$ be such that:
\begin{equation} \label{eq_BezoutProductDiag}
\begin{split}
\gen & = u_0(X) \genarg{A}{X^{\intb \QCCbm}} \genarg{B}{X^{\inta \QCCal \QCCam }}\\
& \qquad \qquad + v_0(X) (X^{\QCCam \QCCbm}-1).
\end{split}
\end{equation}
We show now how to reduce the basis representation to the RGB/POT form. We denote a new Row $i$ by $\rowop{i}'$. For ease of notation, we omit the term $\diag(1, X^{-\inta \QCCam}, X^{-2 \inta \QCCam}, \dots ,X^{- (\QCCal-1) \inta \QCCam} )$ and denote by $Y = X^{\intb \QCCbm}$ and $Z = X^{\inta \QCCal \QCCam}$.

We write the basis of the submodule in unreduced form (as in~\eqref{eq_UnReducedBasis}):
\begingroup
\begin{align}
& \begin{pmatrix}
\genarg{A}{Y} \genarg{B}{Z} & \genarg{A}{Y} f_{1}^{A}(Y) \genarg{B}{Z} & \cdots & \\ 
X^{\QCCam \QCCbm}-1 &   \\
& X^{\QCCam \QCCbm}-1 &  \\
\multicolumn{1}{c}{\bigzero} &  & \ddots & \\
\end{pmatrix} \label{eq_StartMatrix}  \\[2ex]
& \rightarrow \rowop{0}' = u_0(X)\rowop{0} + v_0(X) \rowop{1} + v_0(X) f_{1}^{A}(Y) \rowop{2} \nonumber \\ 
& \qquad \qquad \qquad + \dots + v_0(X) f_{\QCCal-1}^{A}(Y) \rowop{\QCCal}  \nonumber \\[2ex]
& \begin{pmatrix}
\gen & \gen f_{1}^{A}(Y) & \cdots & \\
\genarg{A}{Y} \genarg{B}{Z} & \genarg{A}{Y} f_{1}^{A}(Y) \genarg{B}{Z} & \cdots & \\
X^{\QCCam \QCCbm}-1 & \\
& X^{\QCCam \QCCbm}-1 &  \\
\multicolumn{1}{c}{\bigzero} & & \ddots & \\
\end{pmatrix}, \label{eq_MatrixFirstMerge}
\end{align}
where the $i$th entry in new row 0 was obtained using:
\begin{align}
& u_0(X) \genarg{A}{Y} f_{i}^{A}(Y) \genarg{B}{Z} + v_0(X) f_{i}^{A}(Y) (X^{\QCCam \QCCbm} -1) \nonumber \\
& = f_{i}^{A}(Y) \big( u_0(X) \genarg{A}{Y} \genarg{B}{Z} \nonumber \\
& \qquad \qquad + v_0(X) (X^{\QCCam \QCCbm} - 1) \big), \label{eq_PreGCDForm}
\end{align}
and with~\eqref{eq_BezoutProductDiag} we obtain from~\eqref{eq_PreGCDForm}
\begin{align*}
& f_{i}^{A}(Y) \big( u_0(X) \genarg{A}{Y} \genarg{B}{Z} + v_0(X) (X^{\QCCam \QCCbm}-1) \big) \\
& = f_{i}^{A}(Y) \gen.
\end{align*}
Clearly, $\gen$ divides $\genarg{A}{Y} \genarg{B}{Z}$ and it is easy to check that Row 1 of the matrix in~\eqref{eq_MatrixFirstMerge} can be obtained from Row 0 by multiplying by $\genarg{A}{Y} \genarg{B}{Z}/\gen$. 
Therefore, we can omit the linearly dependent Row $1$ in~\eqref{eq_MatrixFirstMerge} and write the reduced basis as:
\begin{align*}
& \begin{pmatrix}
\gen \hspace*{.3cm}  & \gen f_{1}^{A}(X^{\intb \QCCbm}) & \cdots   & \gen f_{\QCCal-1}^{A}(X^{\intb \QCCbm})
\end{pmatrix}, 
\end{align*}
where we omitted the matrix $\diag (1, X^{-\inta \QCCam}, X^{-2 \inta \QCCam}, \dots, X^{- (\QCCal-1) \inta \QCCam})$ for the first row during the proof, but it will only influence the row-operations by a factor.
\endgroup
\end{proof}
Note that~\eqref{eq_GCDOneLevel} is exactly the generator polynomial of a cyclic product code.
A $1$-level $\QCCal$-quasi-cyclic product has rate greater than $(\QCCal-1)/\QCCal$ and is therefore of high practical relevance. The explicit RGB/POT form of the $1$-level quasi-cyclic product code as in Thm.~\ref{theo_OneLevelQC} allows statements on the minimum distance and to develop decoding algorithms.

\section{Example} \label{sec_Example}
We consider a $2$-quasi-cyclic product code with the same parameters as the one illustrated in Fig.~\ref{fig_CyclicQCC}. In this section we investigate a more explicit example to be able to calculate the basis as given in Thm.~\ref{theo_OneLevelQC}.

Let $\QCCa$ be a binary $2$-quasi-cyclic code of length $\QCCal \QCCam = 2\cdot 17 = 34$ and let $\QCCb$ be a cyclic code of length $\QCCbm=3$. 
We have $X^{17}-1=\minpoly{17}{0} \minpoly{17}{1} \minpoly{17}{3} $, where the minimal polynomials are as defined in~\eqref{eq_MinPoly}. Let the generator matrix of $\QCCa$ in RGB/POT form as in~\eqref{def_GroebBasisMatrix} be $ \genmat[A] = \begin{pmatrix} \gen[A][0][0] & \gen[A][0][1] \end{pmatrix}$ where
\begin{align*}
\gen[A][0][0]  & = \minpoly{17}{1}\\ 
 &  = X^8 + X^7 + X^6 + X^4 + X^2 + X + 1, \\
\gen[A][0][1] & = \minpoly{17}{1} \cdot m_0(X)^3 \cdot (X^3+X^2+1) \\
&  = X^{14} + X^{13} + X^{12} + X^{11} + X^8 + 1,
\end{align*}
and $\QCCa$ is a $\LINQCC{17}{2}{9}{11}{2}$ $2$-quasi-cyclic code. 
Let $\alpha$ be a $17$th root of unity in $\Fxsub{2^{8}} \cong \Fxsub{2}/(X^8 + X^4 + X^3 + X^2 + 1)$. Let $\gen[B] = \minpoly{3}{0} = X+1$ be the generator polynomial of the $\LIN{3}{2}{2}{2}$ cyclic code $\CYCb$ and let $\inta = 1$ and $\intb = -11$ be such that~\eqref{eq_BEzoutRel} holds. We have
\begin{align*}
 X^{51}-1 & = \minpoly{51}{0}\minpoly{51}{1}\minpoly{51}{3}\minpoly{51}{5}\minpoly{51}{9}\\
 & \qquad  \minpoly{51}{11}\minpoly{51}{17}\minpoly{51}{19}.
\end{align*}
According to Thm.~\ref{theo_OneLevelQC}, we calculate 
\begin{align*}
f_{1}^{A}(X^{-11 \cdot 3}) &  \equiv f_{0,1}^{A}(X^{18}) = m_0(X^{18})^3 \cdot (X^{54}+X^{36}+1)\\
 & = (X^{18}+1)^3 \cdot (X^{54}+X^{36}+1) \\
 & = X^{108} + X^{54} + X^{18} + 1 \\
 & \equiv X^{18} + X^6 + X^3 + 1 \mod (X^{51}+1), 
\end{align*}
and we obtain the generator matrix $\genmat = (\gen[0][0] \ \gen[0][1])$ of $\QCCa \otimes \QCCb$, where:
\begin{align*}
\gen[0][0] & = \minpoly{51}{0}\minpoly{51}{1}\minpoly{51}{3}\minpoly{51}{9}\minpoly{51}{19}\\
& = X^{33} + X^{32} + X^{30} + X^{27} + X^{25} + X^{23} + X^{20} \\
& \quad + X^{18} + X^{17} + X^{16} + X^{15} + X^{13} + X^{10} + X^{8} \\
& \quad + X^{6} + X^{3} + X + 1.
\end{align*}
With~Thm.~\ref{theo_OneLevelQC}, we obtain:
\begin{align*}
\gen[0][1] & \equiv a_{1}^{A}(X^{-11 \cdot 3}) \gen[0][0] \\
& \equiv X^{50} + X^{48} + X^{45} + X^{43} + X^{41} + X^{39} + X^{36}\\
& \quad  +  X^{34} + X^{32} + X^{29} + X^{27} + X^{26} + X^{25} + X^{23}\\ 
& \quad + X^{22} + X^{21} + X^{19} + X^{18} + X^{17} + X^{16} + X^{15}\\
& \quad + X^{14} + X^{12} + X^{11} + X^{10} + X^{8} + X^{7} + X^{6}\\
& \quad + X^{4} + X  \mod (X^{51}+1).
\end{align*}
\section{Conclusion and Outlook} \label{sec_Conclusion}
Based on the RGB/POT representation of an $\QCCal$-quasi-cyclic code $\QCCa$ and the generator polynomial of a cyclic code $\QCCb$, a basis representation of the $\QCCal$-quasi-cyclic product code $\QCCa \otimes \QCCb$ was proven. The reduced basis representation of the special case of a $1$-generator quasi-cyclic product code was derived.

The general case of the basis representation of an $\QCCl_A \QCCbl$-quasi cyclic product code from an $\QCCl_A$-quasi-cyclic code $\QCCa$ and an $\QCCbl$-quasi-cyclic code $\QCCb$ as well as the reduction of the basis remains an open future work. Furthermore, a technique to bound the minimum distance of a given quasi-cyclic code by embedding it into a product code similar to~\cite{zeh_decoding_2012} seems to be realizable.

\printbibliography
\end{document}